\newcommand{\junk}[1]{}
\newtheorem{fact}[theorem]{Fact}
\begin{document}

\title{$(\min,+)$ Matrix and Vector Products
for Inputs Decomposable into Few Monotone Subsequences}
\titlerunning{$(\min,+)$ Matrix and Vector Products...}
\author{Andrzej Lingas
  \inst{1}
  \and
  Mia Persson
  \inst{2}
}
\authorrunning{A. Lingas and M. Persson}
\institute{
Department of Computer Science, Lund University, 22100 Lund, Sweden. 
\email{Andrzej.Lingas@cs.lth.se}
\and
Department of Computer Science and Media Technology, Malm\"o University, 20506 Malm\"o, Sweden.
\email{Mia.Persson@mau.se}}
\maketitle
\begin{abstract}
We study the time complexity of computing the $(\min,+)$ matrix
  product of two $n\times n$ integer matrices in terms of $n$ and the
  number of monotone subsequences the rows of the first matrix and the
  columns of the second matrix can be decomposed into.  In particular, we
  show that if each row of the first matrix can be decomposed into at
  most $m_1$ monotone subsequences and each column of the second
  matrix can be decomposed into at most $m_2$ monotone subsequences
  such that all the subsequences are non-decreasing or all of them are
  non-increasing then the $(\min,+)$ product of the matrices can
  be computed in $O(m_1m_2n^{2.569})$ time. On the other hand,
  we observe that
  if all the rows of the first matrix are non-decreasing
  and all columns of the second matrix are non-increasing or {\em vice
    versa} then this case is as hard as the general one.
  
  Similarly, we also study the time complexity of computing the
  $(\min,+)$ convolution of two $n$-dimensional integer vectors in
  terms of $n$ and the number of monotone subsequences the two
  vectors can be decomposed into.  We show that if the
  first vector can be decomposed into at most $m_1$ monotone
  subsequences and the second vector can be decomposed into at most $m_2$
  subsequences such that all the subsequences of the first vector are
  non-decreasing and all the subsequences of the second vector are
  non-increasing or {\em vice versa} then their $(\min,+)$
  convolution can be computed in $\tilde{O}(m_1m_2n^{1.5})$ time.  On the
  other, the case when both vectors are non-decreasing
  or both of them are non-increasing is as hard as the general case.
 \end{abstract}

\section{Introduction}
\noindent
{\em $(\min,+)$ matrix product.}
The $(\min,+)$ matrix product problem for two $n\times n$ integer
matrices $A=(a_{i,j})$, $B=(b_{i,j})$ requires computing an $n\times n$
matrix $C=(c_{i,j})$ such that $c_{i,j} = \min\{a_{i,k} + b_{k,j} | 1\le
k\le n\}.$ By the definition, this problem admits an $O(n^3)$-time
  algorithm.  It is known to be equivalent to the fundamental
  all-pairs shortest-paths problem (APSP) \cite{FM71}.  If any of these
  two problems admits an $t(n)$-time algorithm then the other
  problem can be solved in $O(t(n))$ time \cite{FM71}.
  Hence, the APSP hypothesis
  states that solving any of these two problems requires $n^{3-o(1)}$
  time \cite{V18} and the current best algorithm for any of them
  runs in $\frac {n^3} {2^{\Theta (\sqrt {\log n})}}$ time \cite{W04}.

  The $(\min,+)$ matrix product as APSP has a large number of
  important applications.  Because the prospects of deriving
  a substantially
  subcubic upper time bound for the general $(\min,+)$ matrix product
  are so vague, several authors studied the complexity of computing
  this product for restricted integer matrices. Already several
  decades ago, it was known that the $(\min,+)$ matrix product can be
  computed in $O(Mn^{\omega})$ time, when the values of the entries in
  the input matrices are in the range $\{-M,...,M\} \cup \{+\infty \}$
  \cite{AGM97,Yuv76}.  Here,
  $\omega$ stands for
the smallest real number
such that two $n\times n$ matrices can be multiplied using
$O(n^{\omega +\epsilon})$ operations
over the field of reals, for all $\epsilon > 0$
(i.e., the number of
operations is $O(n^{\omega +o(1)})$ \cite{AV21}).
  More recently, one succeeded to derive
  substantially subcubic upper time bounds when, e.g.,: one of the
  matrices has a small number of different entries in each row
  \cite{Y09}, the input matrices are of the so called
  bounded-difference (i.e., all pairs of horizontally and vertically
  adjacent entries differ by at most O(1))\cite{BGS19}, the input
  matrices are geometrically weighted \cite{C10}, one of the matrices
  has a constant approximate rank \cite{VX19}, the entries of one of
  the matrices are of size $O(n)$ and its rows are non-decreasing
  \cite{GPV21}, or just the entries of one of the matrices range over a
  constant number of integers \cite{C10}.

  \noindent
{\em Contributions on $(\min,+)$ matrix product}.
In this paper, we take a more general approach.  We study the
situation when each row of the first matrix $A$ and each column of the
second matrix $B$ can be decomposed into a bounded number of monotone
subsequences. When all the subsequences are non-decreasing or all of
them are non-increasing, we obtain a substantially subcubic algorithm
for the $(\min,+)$ matrix product already when the bound on the number
of monotone subsequences of each row in $A$ and each
column in $B$  is $O(n^{0.215}).$
Namely, our algorithm runs in $O(m_am_bn^{2.569})$
time, where $m_a$ is an upper bound on the number of monotone subsequences
of each row in $A$ and $m_b$ is an upper bound
on the number of the monotone subsequences of each column in $B.$
On the other hand, we
observe that if all the rows of $A$ are non-decreasing
and all columns of $B$  are non-increasing or {\em vice
  versa} then this case is as hard as the general case.
When the entries in each row or column of one of the input matrices
range over $c$ different integers then it is sufficient that
the columns or rows respectively
of the other matrix are decomposable
into at most $n^{0.119}$ just monotone subsequences to subsume
the upper time bound $O(cn^{2.688})$ \cite{C10} (see Fact \ref{fac: 4})
for the case without restrictions on the other
matrix. Our results on $(\min,+)$ matrix
product are summarized in Table 1.

\begin{table*}[h!]
  \label{table: 1}
\begin{center}
\begin{tabular}{|c||c|c|c|c|} \hline
matrix $A$/matrix $B$ & $c_b$ dif. values  & $m_b$
non-decr. subs. & $m_b$
non-incr. subs. 
\\ \hline \hline
$c_a$ different values & $O(c_ac_bn^{\omega})$ & $O(c_am_bn^{2.569})$&
$O(c_am_bn^{2.569})$ 
\\ \hline
$m_a$ non-decr. subs.
& $O(m_ac_bn^{2.569})$ & $O(m_am_bn^{2.569})$  & ? 
\\ \hline
$m_a$ non-incr. subs. 
& $O(m_ac_bn^{2.569})$ & ? & $O(m_am_bn^{2.569})$ 
\\ \hline
arbitrary & $O(c_bn^{2.688})$ \cite{C10}& ? & ? 
\\ \hline
\end{tabular}
\label{table: 1}
\vskip 0.5cm
\caption{Upper time bounds for computing the $(\min ,+)$ matrix
  product of two $n\times n$ integer matrices $A,\ B$,
  where the rows of $A$ and/or
  the columns of $B$ admit decompositions into a bounded
  number of monotone subsequences (in particular non-decreasing
  or non-increasing) or the entries in each row of $A$ or each column of  $B$ range over
  a constant number of integers.}
\end{center}
\end{table*}

\noindent
{\em $(\min,+)$ vector convolution.}
Our approach to the $(\min,+)$ matrix product is in fact similar
to that to $(\min,+)$ convolution of two $n$-dimensional integer
vectors taken by the authors in the prior paper \cite{LP18}.
The $(\min,+)$ convolution problem for two integer vectors
$a=(a_0,...,a_{n-1})$ and $b=(b_0,...,b_{n-1})$
requires computing an $2n-1$ dimensional vector $c=(c_0,...,c_{2n-2})$
such that $c_k= \min\{a_{\ell} + b_{k-\ell}| \ell \in [\max\{k-n+1,0\},
  \min\{ k,n-1\}]\}$ for
$k=0,...,2n-2.$ By the definition, the $(\min,+)$ vector convolution
can be computed in $O(n^2)$ time but again
getting any substantially subquadratic upper time bound for
this problem would be a breakthrough. The $(\min,+)$ vector convolution
has also a large number of important applications
ranging from
stringology to knapsack problem \cite{BC22,BC14,Cyg,M96}.

\noindent
{\em Contributions on $(\min,+)$ convolution.}
We correct the requirements on the monotonicity of the vector subsequences in
the statement of Theorem 3.7 in \cite{LP18} and provide a proof of the
corrected theorem. It states that the $(\min,+)$
  convolution of two $n$-dimensional integer vectors $a$ and $b$,
  given with the decompositions of the sequences of their consecutive
  coordinates into $m_a$ and $m_b$ subsequences respectively such that
  either all the subsequences of $a$ are non-decreasing and all the
  subsequences of $b$ are non-increasing or {\em vice versa}, can be
  computed in $\tilde{O}(m_am_bn^{1.5})$ time.
   On the
  other hand, the case when both vectors are non-decreasing
  or both of them are non-increasing is as hard as the general case.
  Table 2
  summarizes the updated results on $(\min,+)$ vector convolution
  (cf. \cite{LP18}).
\begin{table*}[h!]
\label{table: 1}
\begin{center}
\begin{tabular}{|c||c|c|c|c|} \hline
vector $a$/vector $b$ & $c_b$ dif. values  & $m_b$
non-decr. subs. & $m_b$
non-incr. subs. 
\\ \hline \hline
$c_a$ different values & $\tilde{O}(c_ac_bn)$ & $\tilde{O}(c_an^{1.5})$&
$\tilde{O}(c_an^{1.5})$ 
\\ \hline
$m_a$ non-decr. subs.
& $\tilde{O}(c_bn^{1.5})$ & ? & $\tilde{O}(m_am_bn^{1.5})$   
\\ \hline
$m_a$ non-incr. subs. 
& $\tilde{O}(c_bn^{1.5})$ & $\tilde{O}(m_am_bn^{1.5})$ & ?
\\ \hline
arbitrary & $\tilde{O}(c_bn^{1.5})$ & ? & ? 
\\ \hline
\end{tabular}
\vskip 0.5cm
\caption{Upper time bounds for computing the $(\min,+)$ convolution of
two $n$-dimensional integer vectors either with coordinates  
having a bounded number of different values, or decompositions into a 
number of non-decreasing or non-increasing subsequences.}
\end{center}
\end{table*}

\noindent
  {\em Techniques.}
  Our algorithms for the $(\min,+)$ matrix product as well as those
  for the $(\min,+)$ vector convolution are mostly based on efficient
  reductions to collections of maximum or/and minimum witness
  problems for corresponding Boolean matrix products or Boolean vector
  convolutions, respectively.  One of our algorithms uses directly a
  method similar to that known for the extreme witnesses.  For the
  definition of the extreme witness problems and facts on them see
  Preliminaries.

  \noindent
      {\em Paper organization.} The next section contains basic
      definitions and facts. Section 3 presents our results on
      $(\min,+)$ matrix product while Section 4 presents our results
      on $(\min,+)$ vector convolution.

\section{Preliminaries}
For two $n$-dimensional vectors 
$a=(a_0,...,a_{n-1})$ and $b=(b_0,...,b_{n-1})$
over a semi-ring $(\mathbb{U},\oplus, \odot)$, 
their convolution over the semi-ring is a vector
\newline
$c=(c_0,...,c_{2n-2})$, where
$c_i=\bigoplus_{l=\max\{i-n+1,0\}}^{\min\{ i,n-1\}}a_l\odot b_{i-l}$
for $i=0,...,2n-2.$ 
Similarly, for a $p\times q$ matrix $A$
and a $q\times r$ matrix $B$ over the semi-ring,
their matrix product over the semi-ring is a $p\times r$ matrix
$C=(c_{i,j})$ such that $c_{i,j}=\bigoplus_{m=1}^{q}a_{i,m}\odot b_{m,j}$
for $1\le i\le p$ and $1\le j\le r.$
In particular, for the semi-rings $(\mathbb{Z},+,\times),$
$(\mathbb{Z},\min,+),$ $(\mathbb{Z},\max,+),$ and $(\{0,1\},\vee, \wedge)$, we obtain
the arithmetic, $(\min,+),$ $(\max,+),$ and the Boolean convolutions
or matrix products,
respectively.

We shall use the unit-cost RAM computational model
with
computer word of length logarithmic in the maximum of
the size of the input and the value of the largest input integer.

For a positive integer $r,$ we shall denote
the set of positive integers not greater than $r$
by $[r].$

Consider the Boolean matrix product $C=(c_{i,j})$ of Boolean $n\times
n$ matrices $A=(a_{i,j})$ and $B=(b_{i,j})$.  A \textit{witness} for a
non-zero entry $c_{i,j}$ of the product $C$ is any index $k\in [n]$
such that $a_{i,k}$ and $b_{k,j}$ are equal to $1.$ Such a minimum
index is the \textit{minimum witness} for $c_{i,j}$ while such a
maximum index is the the \textit{maximum witness} for $c_{i,j}$.  The
\textit{minimum witness problem} (\textit{maximum witness problem},
respectively) is to report the minimum  witness (maximum witness,
respectively) for each non-zero entry of the Boolean matrix product of
the two input matrices.

For positive real numbers
$p,\ q,\ s,$ $\omega(p,q,s)$ denotes the smallest
real number
such that an $n^p\times n^q$ matrix can
be multiplied by $n^q\times n^s$ matrix using
$O(n^{\omega(p,q,s) +\epsilon})$ operations
over the field of reals, for all $\epsilon > 0.$
For convenience, $\omega$ stands for  $\omega(1,1,1).$

  \begin{fact}\cite{CK07}\label{fac: minwit}
    The minimum witness problem and the maximum witness problem
    for the Boolean matrix product of two Boolean $n\times n$ matrices
    can be solved in $O(n^{2+\lambda})$ time, where $\lambda $
    satisfies the equation $\omega(1, \lambda, 1) = 1 + 2\, \lambda $.
    By currently best  bounds on $\omega(1,\lambda,1)$, $O(n^{2+\lambda})=O(n^{2.569}).$
 \end{fact}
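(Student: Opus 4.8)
The plan is to reduce the minimum (respectively maximum) witness problem to a small number of rectangular Boolean matrix products together with a bounded brute-force search, and then to balance the two dominant costs.

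First I would fix a block size $t=n^{\lambda}$ and partition the inner index range $[n]$ into $n/t=n^{1-\lambda}$ consecutive blocks $K_1,\dots,K_{n^{1-\lambda}}$, where $K_s$ consists of the indices $k$ with $(s-1)t<k\le st$. For each block $K_s$ I would form the $n\times t$ submatrix $A_s$ of $A$ restricted to the columns in $K_s$ and the $t\times n$ submatrix $B_s$ of $B$ restricted to the rows in $K_s$, and compute their Boolean product $W_s=A_s\cdot B_s$. By definition $W_s[i,j]=1$ precisely when the pair $(i,j)$ has a witness $k\in K_s$. Each such product multiplies an $n\times n^{\lambda}$ matrix by an $n^{\lambda}\times n$ matrix, which can be carried out over the integers and then thresholded to Boolean in $O(n^{\omega(1,\lambda,1)})$ time.

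Next, the minimum witness of $(i,j)$ lies in the block $K_{s^*}$ with the smallest index $s^*$ for which $W_{s^*}[i,j]=1$ (for the maximum witness, take the largest such index). I would determine $s^*$ for every $(i,j)$ by processing $W_1,W_2,\dots$ in increasing order of $s$, maintaining an array that records the first block in which each pair acquires a witness; this costs $O(n^2)$ per block. Having located the correct block, I would recover the exact minimum (maximum) witness by scanning the at most $t$ indices of $K_{s^*}$ in increasing (decreasing) order and returning the first $k$ with $a_{i,k}=b_{k,j}=1$.

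Finally I would balance the three contributions: $n^{1-\lambda}\cdot O(n^{\omega(1,\lambda,1)})$ for all block products, $O(n^2\cdot n^{1-\lambda})=O(n^{3-\lambda})$ for locating the blocks, and $O(n^2\cdot t)=O(n^{2+\lambda})$ for the within-block search. Setting the first and third exponents equal gives $1-\lambda+\omega(1,\lambda,1)=2+\lambda$, i.e. $\omega(1,\lambda,1)=1+2\lambda$, which is exactly the defining equation for $\lambda$; both terms then equal $O(n^{2+\lambda})$. Existence of a valid $\lambda\in(0,1)$ follows by continuity, since $\omega(1,0,1)=2>1$ while $\omega(1,1,1)=\omega<3$, so the increasing line $1+2\lambda$ meets the curve $\omega(1,\lambda,1)$; moreover $\omega(1,\lambda,1)\ge 2$ forces $\lambda\ge\tfrac12$ at the solution, whence $n^{3-\lambda}=O(n^{2+\lambda})$ and the block-location phase is dominated. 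Plugging in the current best bounds on rectangular multiplication $\omega(1,\lambda,1)$ yields $2+\lambda=2.569$, giving the claimed $O(n^{2.569})$ time. The main obstacle is the balancing itself---verifying that the block-location phase never dominates and pinning down the numeric exponent from the known bounds on $\omega(1,\lambda,1)$---rather than any single step of the reduction.
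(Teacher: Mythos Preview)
Your argument is correct and is precisely the block-partition-plus-brute-force method of Czumaj, Kowaluk and Lingas~\cite{CK07} that the paper cites; the paper itself does not prove this fact but simply quotes it, together with the remark that the numerical exponent $2.569$ follows from the bounds of~\cite{HP,LG14,LGU}. Your balancing and the observation that $\lambda\ge\tfrac12$ (so the $O(n^{3-\lambda})$ block-location phase is dominated) are exactly the standard justification, so nothing is missing.
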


The currently best bounds on $\omega(1,\lambda,1)$ follow from a fact
in \cite{HP} combined with the recent improved estimations on the
parameters $\omega=\omega(1,1,1)$ and $\alpha$, see
\cite{LG14,LGU}. They yield an $O(n^{2.569})$ upper bound on the
running time of the algorithm for minimum and maximum witnesses in
\cite{CK07} (originally, $O(n^{2.575})$).

The following fact is well known (cf. \cite{FP74}).

\begin{fact}\label{fac: 1}
Let  $p$ and $q$ be two $n$-dimensional 
integer vectors.
The arithmetic convolution of $p$ and $q$ can be computed in
$\tilde {O}(n)$ time. Hence, also the Boolean convolution
of two $n$-dimensional vectors can be computed in
$\tilde {O}(n)$ time.
\end{fact}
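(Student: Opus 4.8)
The plan is to reduce the arithmetic convolution to a single polynomial multiplication and then invoke the Fast Fourier Transform. First I would identify the vectors $p=(p_0,\dots,p_{n-1})$ and $q=(q_0,\dots,q_{n-1})$ with the polynomials $P(x)=\sum_{l=0}^{n-1}p_l x^l$ and $Q(x)=\sum_{l=0}^{n-1}q_l x^l$, and observe that the $i$-th coefficient of the product $P(x)Q(x)$ is exactly $\sum_l p_l q_{i-l}$, i.e.\ the $i$-th entry of the arithmetic convolution $c$. Thus computing $c$ amounts to multiplying two polynomials of degree less than $n$.

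To carry out this multiplication in near-linear time I would pick $N$ to be a power of two with $N\ge 2n-1$, evaluate $P$ and $Q$ at the $N$-th roots of unity by the FFT in $O(N\log N)$ time, multiply the two vectors of evaluations pointwise in $O(N)$ time, and recover the coefficients of the product by an inverse FFT, again in $O(N\log N)$ time. Since $N=O(n)$, the whole computation uses $O(n\log n)$ arithmetic operations.

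The main obstacle is exactness: the convolution is defined over the integers, whereas the textbook FFT operates over the complex field, so I must guarantee that the recovered coefficients are obtained exactly on the unit-cost RAM. Each output entry $c_i$ is an integer of absolute value at most $n\cdot \max_l|p_l|\cdot\max_l|q_l|$, so its bit length exceeds the word length of our model by at most $O(\log n)$ bits. I would therefore either carry the FFT over a finite field $\mathbb{Z}_\rho$ for a prime $\rho$ exceeding this bound and admitting an $N$-th root of unity (a number-theoretic transform), or run the complex FFT with $O(\log n)$ bits of precision and round; either way each arithmetic operation costs $\tilde O(1)$ word operations, giving the claimed $\tilde O(n)$ bound.

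Finally, the Boolean case follows by a threshold trick: regarding the two $\{0,1\}$-vectors as integer vectors and forming their arithmetic convolution, the $i$-th Boolean convolution entry $\bigvee_l (a_l\wedge b_{i-l})$ equals $1$ precisely when the corresponding integer entry $\sum_l a_l b_{i-l}$ is positive. Comparing each of the $O(n)$ integer entries against zero adds only $O(n)$ work, so the Boolean convolution is likewise computed in $\tilde O(n)$ time.
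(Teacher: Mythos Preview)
Your argument is correct and is the standard FFT-based derivation. Note, however, that the paper does not actually prove this fact: it simply states it as well known and cites Fisher--Paterson~\cite{FP74}. So there is nothing to compare against beyond observing that you have supplied exactly the classical proof underlying that citation. One minor sharpening: in the paper's RAM model the word length is $\Theta(\log\max(n,M))$ where $M$ bounds the input entries, and each output entry has absolute value at most $nM^2$, hence bit length at most $\log n + 2\log M = O(\text{word length})$; so intermediate values fit in $O(1)$ words rather than merely ``$O(\log n)$ extra bits'', but this only strengthens your conclusion.
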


Let $c=(c_0,...,c_{2n-2})$ be the Boolean convolution of two
$n$-dimensional Boolean vectors $a$ and $b.$ A {\em witness} of
$c_i=1$ is any $l\in [ \max\{i-n+1,0\}, \min\{ i,n-1\}]$
such that $a_l\wedge b_{i-l}=1.$ 
A {\em minimum witness} (or {\em maximum witness}) of
$c_i=1$ is the smallest (or, the largest, respectively) witness
of $c_i.$
The 
{\em minimum witness problem}, or {\em maximum witness problem}
for the  Boolean convolution 
of two 
$n$-dimensional Boolean vectors is to determine  
the minimum witnesses or
the maximum witnesses, respectively, for all non-zero entries
of the Boolean convolution of the vectors.

\begin{fact}(Theorem 3.2 in \cite{LP18})\label{fac: minwitcon}
The minimum witness problem
(maximum witness problem, respectively) for the Boolean convolution
of two $n$-dimensional vectors can be solved in
$\tilde{O}(n^{1.5})$ time.
\end{fact}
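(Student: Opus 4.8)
The plan is to reduce the minimum witness problem for the Boolean convolution to a small number of ordinary Boolean convolutions, each computable in near-linear time by Fact \ref{fac: 1}, by exploiting a partition of the range of the index $l$ into blocks of size roughly $\sqrt{n}$. The guiding observation is that a single Boolean convolution reveals only \emph{whether} a witness exists, not \emph{where}; so to locate the \emph{smallest} witness I would first determine, for every output coordinate, the block that contains its minimum witness, and only then search inside that one block by brute force.

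Concretely, first I would set $s=\lceil\sqrt{n}\rceil$ and partition the index set $\{0,\dots,n-1\}$ for $l$ into $\lceil n/s\rceil$ consecutive blocks $B_0,B_1,\dots$, each of size at most $s$. For each block $B_t$ I would form the Boolean vector $a^{(t)}$ obtained from $a$ by zeroing out every coordinate $a_l$ with $l\notin B_t$, and compute the Boolean convolution $c^{(t)}$ of $a^{(t)}$ and $b$ using Fact \ref{fac: 1}. By construction $c^{(t)}_i=1$ iff the entry $c_i$ admits a witness lying in block $B_t$. Since each convolution costs $\tilde O(n)$ and there are $O(\sqrt n)$ blocks, this phase runs in $\tilde O(n^{1.5})$ time.

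Next, for each output coordinate $i\in\{0,\dots,2n-2\}$ I would scan the indicators $c^{(0)}_i,c^{(1)}_i,\dots$ in increasing order of $t$ and record the smallest $t$ with $c^{(t)}_i=1$; this block $B_t$ is precisely the one holding the minimum witness of $c_i$ (and if no block fires, then $c_i=0$ and there is no witness). Finally, to pin down the exact index I would examine the at most $s$ candidates $l\in B_t$ directly, returning the smallest $l$ with $a_l\wedge b_{i-l}=1$ among those $l$ in the admissible range $[\max\{i-n+1,0\},\min\{i,n-1\}]$. Each coordinate costs $O(\sqrt n)$ both for locating its block and for the final scan, so over all $O(n)$ coordinates this phase also takes $O(n^{1.5})$ time, giving the claimed $\tilde O(n^{1.5})$ bound overall. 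The maximum witness problem is handled symmetrically, selecting instead the largest $t$ with $c^{(t)}_i=1$ and returning the largest admissible $l\in B_t$ with $a_l\wedge b_{i-l}=1$.

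The one genuinely delicate point, and the one I would treat most carefully, is the bookkeeping at the block and range boundaries: one must verify that zeroing $a$ outside $B_t$ captures exactly the witnesses confined to that block, and that the brute-force scan respects the admissible interval $[\max\{i-n+1,0\},\min\{i,n-1\}]$ so that no spurious witness is reported. The algorithmic core—trading the inability of a single convolution to expose witness positions for a $\sqrt n$-factor of repeated block-restricted convolutions, balanced against a $\sqrt n$-cost local scan—is exactly what forces the choice $s=\sqrt n$ and yields the $n^{1.5}$ exponent; unlike the matrix case of Fact \ref{fac: minwit}, there is no rectangular-multiplication speedup to exploit here, so no better balance is available by this method.
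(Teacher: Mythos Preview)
The paper does not supply its own proof of this fact; it is quoted as Theorem~3.2 of \cite{LP18} and used as a black box. Your argument is correct and is exactly the standard block-decomposition proof one expects here: partition the witness range into $\Theta(\sqrt{n})$ contiguous blocks, run one Boolean convolution per block to learn, for every output coordinate, which block holds an extreme witness, then do an $O(\sqrt{n})$ brute-force scan inside that block. Indeed, the paper itself re-uses essentially the same mechanism later in the algorithm of Fig.~\ref{fig: algo4} (sorting into $\lceil n/\ell\rceil$ groups, computing a Boolean convolution per group, then scanning the first group that fires), so your approach is fully in line with the paper's techniques.
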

For a sequence $s$ of integers, we shall denote 
the minimum number of monotone
subsequences into which $s$ can be decomposed by $mon(s)$.

\begin{fact}\label{fac: 2} \cite{FKN02,YCL07}. A sequence $s$ of $n$ integers
can be decomposed into 
\newline
$O(mon(s)\log n)$ monotone subsequences in $O(n^{1.5}\log n)$ time. 
\end{fact}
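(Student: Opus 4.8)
The plan is to prove the bound by a greedy peeling strategy: repeatedly extract a \emph{longest} monotone subsequence from what remains of $s$, and argue that only $O(mon(s)\log n)$ such extractions are ever needed. To find a longest monotone subsequence of a current residual $s'$ of length $n'$, I would compute both a longest non-decreasing and a longest non-increasing subsequence by the standard patience-sorting procedure (binary search into the tails array), each in $O(n'\log n')$ time, and keep the longer of the two. This is the only algorithmic primitive the proof requires; essentially all of the difficulty lies in the combinatorial accounting.

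The key lemma I would establish is that each extraction removes at least a $1/m$ fraction of the current elements, where $m=mon(s)$. This rests on two observations. First, $mon$ is monotone under taking subsequences: an optimal cover of $s$ by $m$ monotone subsequences restricts to a cover of any residual $s'$ by at most $m$ monotone subsequences (a subsequence of a monotone sequence is monotone, and empty parts are discarded), so $mon(s')\le m$ throughout the process. Second, if $s'$ has $n'$ elements and is covered by at most $m$ monotone subsequences, then by pigeonhole one of them has length at least $n'/m$; since we extract a \emph{longest} monotone subsequence, its length is at least $n'/m$ as well. Hence the residual length drops from $n'$ to at most $n'(1-1/m)$ at each step, so after $t$ steps at most $n(1-1/m)^t\le n\,e^{-t/m}$ elements survive, and $t=O(m\log n)$ steps suffice to exhaust $s$. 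This immediately yields a decomposition into $O(mon(s)\log n)$ monotone subsequences, which is the claimed approximation guarantee.

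For the running time I would bound the total work by a telescoping geometric sum. Writing $n_t$ for the residual length before the $(t{+}1)$-st extraction, we have $n_t\le n(1-1/m)^t$, and step $t$ costs $O(n_t\log n_t)$; therefore the total cost is $O\bigl(\log n\sum_t n_t\bigr)=O\bigl(\log n\cdot n\sum_{t\ge 0}(1-1/m)^t\bigr)=O(mn\log n)$. The final step is to make this bound independent of $m$ via the Erd\H{o}s--Szekeres theorem: every sequence of $n'$ integers contains a monotone subsequence of length at least $\sqrt{n'}$, so in particular $m=mon(s)=O(\sqrt n)$. Substituting this gives the target running time $O(n^{1.5}\log n)$.

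The step I expect to be the main obstacle is arranging the accounting so that the time comes out as $O(n^{1.5}\log n)$ rather than the naive $O(mon(s)\,n\log^2 n)$: a careless analysis charges $O(n\log n)$ to each of the $\Theta(m\log n)$ early iterations whose residual is still $\Theta(n)$. Reconciling the per-iteration cost with the stated bound requires combining the geometric-decay telescoping, which caps $\sum_t n_t$ at $O(mn)$, with the Erd\H{o}s--Szekeres bound $m=O(\sqrt n)$; checking that the longest-monotone primitive can simply be rerun on the shrinking residual without extra overhead is then a routine detail.
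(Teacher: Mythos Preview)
The paper does not prove this statement; it is quoted as a known fact with citations to \cite{FKN02,YCL07} and used as a black box. So there is no ``paper's own proof'' to compare against.

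That said, your argument is correct and is essentially the one in those references. The greedy longest-monotone peeling together with the pigeonhole observation $mon(s')\le mon(s)$ is exactly the approximation analysis of Fomin--Kratsch--Novelli, yielding the $O(mon(s)\log n)$ bound on the number of parts; the telescoping $\sum_t n_t\le mn$ combined with $m=O(\sqrt n)$ gives the $O(n^{1.5}\log n)$ running time. One small point you should spell out: the implication ``Erd\H{o}s--Szekeres $\Rightarrow mon(s)=O(\sqrt n)$'' is not immediate from the existence of a single monotone subsequence of length $\sqrt n$; you need to iterate the bound (each peel removes at least $\sqrt{n'}$ elements, and the recurrence $n'\mapsto n'-\sqrt{n'}$ terminates in $O(\sqrt n)$ steps). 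Once that is made explicit, the proof is complete.
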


\begin{fact} \label{fac: 4}(Theorem 3.2 in \cite{C10}).
Let  $A$ and $B$ be two $n\times n$
integer matrices,  
where the entries of one of the matrices
range over at most $c$ different integers.
The $(\min,+)$ matrix product of $A$
and $B$ can be computed in
$O(cn^{2.688})$ time.
\end{fact}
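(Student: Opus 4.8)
The statement is symmetric in the two matrices, so I would assume that it is $B$ whose entries range over at most $c$ distinct integers $w_1<w_2<\dots<w_c$. The plan is to split the product along these values. Writing $M^{(t)}_{k,j}=\mathbf 1[b_{k,j}=w_t]$ for the $0/1$ mask of the $t$-th value and $g_t(i,j)=\min\{a_{i,k}: M^{(t)}_{k,j}=1\}$ for the corresponding \emph{masked row minimum}, we have
\[
  c_{i,j}=\min_{1\le t\le c}\bigl(w_t+g_t(i,j)\bigr).
\]
Thus, once all $c$ matrices $g_t$ are available, the outer minimum over $t$ (with the shift $w_t$) costs only $O(cn^2)$, which is absorbed into the claimed bound. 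The whole task therefore reduces to computing, for each value class $t$, the masked minimum $g_t$, i.e.\ a $(\min,+)$ product of the real matrix $A$ with a matrix whose entries are $0$ (selected) or $+\infty$ (unselected).

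For a single class $t$ I would use a value-rank bucketing together with fast matrix multiplication. First sort each row of $A$ once (globally $O(n^2\log n)$, shared across all classes) and cut the $n$ columns of each row $i$ into $r$ blocks of $n/r$ columns according to the rank of $a_{i,k}$ within that row; let $P^{\le\beta}_{i,k}=\mathbf 1[\text{$a_{i,k}$ lies in one of the first $\beta$ blocks of row $i$}]$. The crucial observation — and the reason the per-row sorted orders do not obstruct anything — is that $P^{\le\beta}$ is simply an ordinary $0/1$ matrix, so $(P^{\le\beta}M^{(t)})_{i,j}>0$ decides by a \emph{standard} Boolean matrix product whether the minimizing column for $(i,j)$ falls in the first $\beta$ blocks. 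Running these $r$ products and, for each $(i,j)$, reading off the least $\beta$ that yields a positive entry localizes the answer of $g_t(i,j)$ to a single block of size $n/r$. (This is exactly the point where the naive reformulation as a minimum-witness question of Fact~\ref{fac: minwit} fails, since there the column order would have to be common to all rows.)

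It then remains to finish exactly: for each $(i,j)$ scan the $n/r$ columns of the identified block, test the mask $M^{(t)}$, and take the smallest surviving $a_{i,k}$. The coarse step costs $r\,n^{\omega}$ and the within-block scan costs $n^2\cdot(n/r)=n^3/r$ per class, so balancing $r\,n^{\omega}=n^3/r$, i.e.\ $r=n^{(3-\omega)/2}$, gives $O(n^{(3+\omega)/2})$ per value class. Summing over the $c$ classes and adding the shared sorting yields
\[
  O\!\left(c\,n^{(3+\omega)/2}+n^2\log n\right)=O\!\left(c\,n^{2.688}\right),
\]
using $(3+\omega)/2\le 2.688$ for the bounds on $\omega$ in force, which is the asserted bound.

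The main obstacle I expect is getting the exact optimization answer $g_t$ — not merely a threshold decision — inside a single sweep of Boolean products, and choosing the bucket count $r$ so that the localization cost $r\,n^{\omega}$ and the residual brute-force scan $n^3/r$ balance to the dominance-type exponent $(3+\omega)/2$; everything else (the value decomposition, the final minimum over the $c$ classes, and the one-time sorting) is routine and comfortably subsumed. The conceptual heart is recognizing that the row-dependent rank bucketing can be encoded as plain $0/1$ matrices, so that fast rectangular/square matrix multiplication applies even though the minimum is taken over a different column order in each row.
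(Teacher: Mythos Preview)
The paper does not give its own proof of this statement; it is quoted without proof as Theorem~3.2 of Chan~\cite{C10}. Your argument is correct and is essentially the standard Matou\v{s}ek-style technique that underlies Chan's bound: decompose $B$ by its at most $c$ distinct values, and for each value class compute the masked row minima $g_t$ by rank-bucketing each row of $A$ into $r$ blocks, localising the relevant block with $r$ Boolean $n\times n$ products against the mask $M^{(t)}$, and finishing with a brute-force scan of the $n/r$ candidates in that block. Your balance $r=n^{(3-\omega)/2}$ gives $O(n^{(3+\omega)/2})$ per value class and hence $O(c\,n^{(3+\omega)/2})$ overall; the exponent $2.688$ in the fact simply reflects the bound on $\omega$ current when~\cite{C10} appeared. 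Your remark that Fact~\ref{fac: minwit} does not apply directly---because the ``minimum'' here is with respect to a row-dependent sorted order rather than the common column order---and that the rank-indicator matrices $P^{\le\beta}$ restore a uniform $0/1$ encoding amenable to fast multiplication is exactly the point of the construction. There is thus nothing in the present paper to compare against; your sketch reconstructs the intended proof.
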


\section{(min,+) matrix product}

Consider two $n\times n$ integer matrices $A$ and $B.$
If we are given decompositions of the rows
of $A$ and the columns of $B$
into monotone subsequences
such that either all the subsequences 
are non-decreasing or all of them are non-increasing then
we can use the algorithm depicted in Fig. \ref{fig: algo1}
in order to compute the $(\min,+)$ matrix product of $A$ and $B.$
First, for all $i,j \in [n],$ for
each subsequence $a^o_i$ of
the $i$-th row of $A$ and
each subsequence $b^r_j$ of
the $j$-th column of $B,$ we compute
the Boolean vectors $char(a^o_i)$ and $char(b^r_j)$
indicating with ones 
the entries  of the row and column covered by $a^o_i$ or
$b^r_j,$ respectively. Next, we form the Boolean matrices $A^o$
whose rows are the vectors $char(a^o_i)$ and the  Boolean matrices
$B^r$ whose columns are the vectors $char(b^r_j)$.
Then , depending
if the subsequences are non-decreasing
or non-increasing, for each pair of matrices $A^o,\ B^r,$ we compute either
 the minimum witnesses of the Boolean matrix product of $A^o$
 and $B^r$  or the maximum witnesses of this Boolean product,
 respectively.
 We use the extreme witnesses to update
the current entries of the computed $(\min,+)$
matrix product of $A$ and $B.$
The correctness of the reduction
to extreme witnesses for the Boolean matrix product
of $A^o$ and $B^r$ in the algorithm is implied by
the following observation.

\begin{figure}
  \begin{algorithmic}[1]
    {\small
\REQUIRE two $n\times n$ integer matrices
$A=(a_{i,j})$ and 
$B=(b_{i,j})$, for each $i\in [n]$,  a
decomposition of the $i$-th row of $A$ into 
$m_a$ subsequences $a^o_i$,
and for each $j\in [n],$
a decomposition of the $j$-th column of $B$ 
 into
 $m_b$ subsequences $b^r_j$,
 such that either  all the subsequences are non-decreasing
 or all of them are non-increasing.
\ENSURE the $(\min,+)$ matrix product  $C= (c_{i,j})$ of 
$A$ and $B$.
\FOR {each $o\in [m_a]$}
\FOR {each $i\in [n]$}
\STATE form a Boolean vector $char(a^o_i)$
with $n$ coordinates indicating with
ones the entries of
the $i$-th row of $A$ covered by $a^o_i$
\ENDFOR
\STATE form a Boolean matrix $A^o$, where
for $i\in [n],$ $char(a^o_i)$ is the $i$-th row
\ENDFOR
\FOR {each $r\in [m_b]$}
\FOR {each $j\in [n]$}
\STATE form a Boolean vector $char(b^r_j)$
with $n$ coordinates indicating with
ones the entries  of the $j$-th column of $B$
covered by $b^r_j$
\ENDFOR
\STATE form a Boolean matrix $B^r$, where
for $j\in [n],$ $char(b^r_j)$ is the $j$-th column
\ENDFOR
\STATE initialize the $C=(c_{i,j})$ matrix by setting each $c_{i,j}$
to $+\infty$
\FOR {each pair $A^o$ and $B^r$}
\STATE {\bf if} the subsequences are non-decreasing {\bf then} compute the 
minimum witnesses $(wit(d_{i,j}))$
for the Boolean matrix product $(d_{i,j})$  of $A^o$ and $B^r$
\STATE {\bf if} the subsequences are non-increasing {\bf then} compute the 
maximum witnesses $(wit(d_{i,j}))$
for the Boolean matrix product $(d_{i,j})$  of $A^o$ and $B^r$
 \FOR {$i = 1$ \TO $n$}
\FOR {$j = 1$ \TO $n$}
\STATE {\bf if} $d_{i,j}\neq 0$ {\bf then} $c_{i,j}\gets \min \{
a_{i,wit(d_{i,j})}+b_{wit(d_{i,j}),j},c_{i,j}\}$
\ENDFOR
\ENDFOR
\ENDFOR
\STATE $C \gets (c_{i,j})$
\RETURN $C$}
\end{algorithmic}
\caption{An algorithm for computing the $(\min,+)$ product of
  two $n\times n$ integer matrices $A$ and $B$
  given with decompositions of all  rows of $A$ into
  $m_a$ subsequences and decompositions of all columns
  of $B$ into $m_b$ subsequences such
  that either all the subsequences are non-decreasing or all
  the subsequences are non-increasing.
}
\label{fig: algo1}
\end{figure}
\vfill
\newpage
\begin{remark}\label{rem: second}
  Let  $A=(a_{i,j})$ and $B=(b_{i,j})$ be two $n\times n$ integer matrices.
  Next, let $a'$ be a subsequence of the sequence of entries in an
  $i$-th row of $A$ and let $b'$ a subsequence of the sequence of
  entries in an $j$-th column of $B.$
  If $a'$ and $b'$  are non-decreasing then
  if the set $\{ a_{i,k}+b_{k,j}| k\in [n] \wedge a_{i,k}\in a'
  \wedge b_{k,j}\in b'\}$ is not empty
  then the minimum sum in the set
  is achieved by the pair minimizing the index $k$.
  Analogously, if $a'$ and $b'$ are non-increasing
  then the minimum sum is achieved by the pair
  maximizing the index $k$.
  \end{remark}

\begin{theorem}\label{theo: mat}
  Let  $A=(a_{i,j})$ and $B=(b_{i,j})$ be two $n\times n$ integer matrices.
  Suppose that for each $i,j\in [n],$ there is given
  a decomposition of the $i$-th row of $A$
  into at most $m_a$ subsequences  and a decomposition of the $j$-th
  column of $B$ into at most $m_b$ subsequences, where either all the subsequences
  are non-decreasing or all of them are non-increasing.
  Then, the $(\min, +)$  product of $A$ and $B$ can be computed
  in $O(m_am_b n^{2.569})$ time.
\end{theorem}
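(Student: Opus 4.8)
The plan is to verify that the algorithm of Fig.~\ref{fig: algo1} is correct and then to bound its running time; the time bound is the easy half, and the correctness argument is where the real work lies. For correctness I would fix a target entry $(i,j)$ and show that the algorithm computes $c_{i,j}=\min_{k\in[n]}\{a_{i,k}+b_{k,j}\}$. The key structural observation is that the decomposition of the $i$-th row of $A$ into $a^1_i,\dots,a^{m_a}_i$ partitions the index set $[n]$, as does the decomposition of the $j$-th column of $B$ into $b^1_j,\dots,b^{m_b}_j$. Hence every index $k\in[n]$ belongs to exactly one pair $(o,r)$, namely the unique $o$ with $char(a^o_i)[k]=1$ and the unique $r$ with $char(b^r_j)[k]=1$; writing $S_{o,r}$ for the set of common indices of $a^o_i$ and $b^r_j$, the family $\{S_{o,r}\}_{o,r}$ forms a partition of $[n]$. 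Therefore $\min_{k\in[n]}\{a_{i,k}+b_{k,j}\}=\min_{o,r}\min_{k\in S_{o,r}}\{a_{i,k}+b_{k,j}\}$, and it suffices to show that for each pair the inner minimum is evaluated correctly by the algorithm.

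I would then invoke Remark~\ref{rem: second}. For a fixed pair $(o,r)$, the entry $d_{i,j}$ of the Boolean product of $A^o$ and $B^r$ is nonzero precisely when $S_{o,r}\neq\emptyset$, and in that case its minimum (resp.\ maximum) witness equals $\min S_{o,r}$ (resp.\ $\max S_{o,r}$). Since $S_{o,r}$ is a subset of the index set of the monotone subsequence $a^o_i$ and of that of $b^r_j$, the restrictions of the row and column to $S_{o,r}$ remain monotone in the same direction. Remark~\ref{rem: second} then guarantees that when all subsequences are non-decreasing the minimum of $\{a_{i,k}+b_{k,j}\}$ over $k\in S_{o,r}$ is attained at $\min S_{o,r}$, i.e.\ at the minimum witness, and symmetrically at $\max S_{o,r}$, the maximum witness, in the non-increasing case. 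Thus each inner minimum equals $a_{i,wit(d_{i,j})}+b_{wit(d_{i,j}),j}$, and the $\min$-updates over all $m_am_b$ pairs reconstruct $c_{i,j}$ exactly; empty subsequences (permitted since the bound is \emph{at most} $m_a$ or $m_b$) yield a zero entry $d_{i,j}$ and are harmlessly skipped.

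For the running time, the preprocessing that builds the vectors $char(a^o_i)$ and $char(b^r_j)$ and assembles the matrices $A^o,B^r$ costs $O((m_a+m_b)n^2)$. The main loop ranges over the $m_am_b$ pairs $(A^o,B^r)$; for each pair, computing the minimum or maximum witnesses of the Boolean product of two $n\times n$ matrices costs $O(n^{2.569})$ by Fact~\ref{fac: minwit}, and the subsequent double loop performing the $\min$-updates costs $O(n^2)$. Summing gives $O(m_am_b n^{2.569}+(m_a+m_b)n^2)=O(m_am_b n^{2.569})$, since $n^{2.569}$ dominates $n^2$ and $m_am_b\ge\max\{m_a,m_b\}$. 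The step I expect to be the main obstacle is precisely the correctness argument: one must check both that the two decompositions genuinely partition $[n]$ (so that no index is double-counted or omitted when the minimum is split over pairs) and that monotonicity of each subsequence is inherited by its restriction to the common-index set $S_{o,r}$, which is exactly what licenses identifying the sum-minimizer with the extreme witness supplied by Fact~\ref{fac: minwit}.
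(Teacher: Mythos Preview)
Your proposal is correct and follows essentially the same approach as the paper: both invoke Remark~\ref{rem: second} to identify the minimizer over each pair $(o,r)$ with the extreme witness of the Boolean product of $A^o$ and $B^r$, and both appeal to Fact~\ref{fac: minwit} for the $O(n^{2.569})$ cost per pair. The only cosmetic difference is that you phrase correctness via the partition $\{S_{o,r}\}$ of $[n]$ and a single decomposition of the outer minimum, whereas the paper establishes the two inequalities $c_{i,j}\ge$ and $c_{i,j}\le$ the true entry separately; these are equivalent formulations of the same argument.
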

\begin{proof}
  By Remark \ref{rem: second}, the following condition holds:
  (*) if $d_{i,j}\neq 0$ in line 19 of the
  algorithm in Fig. \ref{fig: algo1} then $\min
  \{a_{i,k}+b_{k,j}|k\in [n] \wedge char(a^o_i)_k=1 _i \wedge char(b^r_j)_k=1  \}$ is equal to
  the first argument of the minimum in this line, i.e.,
  $a_{i,wit(d_{i,j})}+b_{wit(d_{i,j}),j}$. Hence,
  none of the entries of the output matrix $C$
has a lower value than the corresponding entry  of
the $(\min,+)$ matrix product of $A$ and $B.$
Conversely, if the ${i,j}$ entry of
the $(\min,+)$ matrix  of $A$ and $B$ equals
$a_{i,k}+b_{k,j}$ then there exist
$o,\ r$ such that  $a_{i,k}\in a^o_i $ and $b_{k,j}\in b^r_j ,$
i.e., more precisely $char(a^o_i)_k=1$ and $char(b^r_j)_k=1.$
Hence, again by (*) and line 19 in the algorithm,
the $c_{i,j}$ entry in the output matrix  has
value not larger than the corresponding $i,j$ entry
of the $(\min,+)$ matrix product of $A$ and $B.$

The time complexity of the algorithm is dominated by
the $m_am_b$ computations of minimum or maximum witnesses
of the Boolean product of two Boolean $n\times n$ matrices.
Thus, by Fact \ref{fac: minwit} the algorithm
runs in  $O(m_am_bn^{2.569})$ time.
\qed
\end{proof}
\par
\noindent
    {\em  Example 1.}
    We shall assume the notation from our first algorithm.
    Suppose that two input integer matrices $A$ and $B$
    have size $6\times 6$ and that each row of $A$
    can be decomposed into at most $3$
    non-decreasing subsequences while each column of
    of $B$ can be decomposed into at most $2$
    non-decreasing subsequences. Suppose in particular
    that the fourth row $a_4$ of $A$ is $(1,7,3,9,8,4)$ while the fifth
    column $b_5$ of $B$ is $(5,11,2,7,13,10).$ Then, it is easy to see
    that in the $(\min,+)$ product $(c_{i,j})$ of $A$ and $B$,
    $c_{4,5}= 5$ holds. Note that $a_4$ can be decomposed
    into three following non-decreasing subsequences
    $a^1_4=(1,\ ,3,\ ,\ ,4) ,$ $a^2_4=(\ ,7,\ ,\ ,8,\ ),$
    $a^3_4=(\ ,\ ,\ ,9,\ ,\ )$ while $b_5$
    can be decomposed into two non-decreasing subsequences
    $b^1_5=(5,11,\ ,\ ,13,\ )$ and $b^2_5=(\ ,\ ,2,7,\ ,10)$. Their characteristic
    Boolean vectors are $char(a^1_4)=(1,0,1,0,0,1),$
    $char(a^2_4)=(0,1,0,0,1,0),$ $char(a^3_4)=(0,0,0,1,0,0)$,
    and $char(b^1_5)=(1,1,0,0,1,0),$ $char(b^2_5)=(0,0,1,1,0,1),$ respectively.
    For $o\in [3]$ and $r\in [2],$ the inner Boolean
    product of the vectors $char(a^o_4)$ and
    $char(b^r_5)$ yields the $d_{4,5}$ entry
    of the Boolean matrix product of the Boolean matrices
    $A^o$ and $B^r$ in the algorithm.
    The minimum witness of the 
    entry  $d_{4,5}$  is $1$ for $o=1,\ r=1,$
    $3$ for $o=1,\ r=2,$ $2$ for $o=2,\ r=1,$ and
    $4$ for $o=3,\ r=2,$ respectively.
    For the other combinations of $o$ and $r,$
    it is undefined.
    Hence, $c_{4,5}$ is computed as
    the minimum of  $1+5,3+2, 7+11,9+7$  which is $5$ as required. 
    \par
    \vskip 3pt
We shall call a sequence of integers {\em uniform}
if all its elements have the same value.

A uniform subsequence of a matrix row or column covering all entries in the row
or column having the same fixed value is both non-increasing and
non-decreasing. If the entries in the row or column can have
at most $c$ different values then the row or column can be easily decomposed
into at most $c$ uniform subsequences.
Hence, if the entries in rows or columns of one of the input matrices
range over relatively few different integers then
it is sufficient to decompose the rows or columns of
the other matrix into relatively few monotone subsequences
in order to obtain relatively efficient algorithm for
the $(\min,+)$ matrix product. The aforementioned subsequences
do not have to be simultaneously non-decreasing or
non-increasing as the counterpart subsequences in
the first matrix are uniform and hence are both
non-decreasing and non-increasing.
\begin{figure}[bt!]
\begin{algorithmic}[1]
\REQUIRE two $n\times n$ integer matrices
$A=(a_{i,j})$ and 
$B=(b_{i,j})$, for each $i\in [n]$  a
decomposition of the $i$-th row of $A$ into 
$m$ monotone subsequences $a^o_i$ 
and for each $j\in [n],$
a decomposition of the $j$-th column of $B$ 
 into 
 $c$ uniform subsequences $b^r_j$.
\ENSURE the $(\min,+)$ matrix product  $C= (c_{i,j})$ of 
$A$ and $B$ 
\FOR {each $o\in [m_a]$}
\FOR {each $i\in [n]$}
\STATE form a Boolean vector $char(a^o_i)$
with $n$ coordinates indicating with
ones the entries of
the $i$-th row of $A$ covered by $a^o_i$
\ENDFOR
\STATE form a Boolean matrix $A^o$, where
for $i\in [n],$ $char(a^o_i)$ is the $i$-th row
\ENDFOR
\FOR {each $r\in [m_b]$}
\FOR {each $j\in [n]$}
\STATE form a Boolean vector $char(b^r_j)$
with $n$ coordinates indicating with
ones the entries  of the $j$-th column of $B$
covered by $b^r_j$
\ENDFOR
\STATE form a Boolean matrix $B^r$, where
for $j\in [n],$ $char(b^r_j)$ is the $j$-th column
\ENDFOR
\STATE initialize the matrix $C=(c_{i,j})$ by setting
each $c_{i,j}$ to $+\infty$
\FOR {each pair $A^o,\ B^r$}
\STATE compute the 
minimum witnesses $(minwit(d_{i,j}))$  and the maximum witnesses
$(maxwit(d_{i,j}))$
of the Boolean matrix product $(d_{i,j})$  of $A^o$ and $B^r$
 \FOR {$i = 1$ \TO $n$}
\FOR {$j = 1$ \TO $n$}
\STATE {\bf if} $d_{i,j}\neq 0$
and $a^o_i$ is non-decreasing {\bf then} $c_{i,j}\gets \min \{
a_{i,minwit(d_{i,j})}+b_{minwit(d_{i,j}),j},c_{i,j}\}$
\STATE {\bf if} $d_{i,j}\neq 0$
and $a^o_i$ is non-increasing {\bf then} $c_{i,j}\gets \min \{
a_{i,maxwit(d_{i,j})}+b_{maxwit(d_{i,j}),j},c_{i,j}\}$
\ENDFOR
\ENDFOR
\ENDFOR
\STATE $C \gets (c_{i,j})$
\RETURN $C$
\end{algorithmic}
\caption{An algorithm for computing the $(\min,+)$ product of
  two $n\times n$ integer matrices $A$ and $B$
  given with decompositions of each row of $A$ into
  $m$ monotone subsequences and decompositions of each column
  of $B$ into $c$ uniform subsequences.
}
\label{fig: algo2}
\end{figure}

\begin{remark}\label{rem: third}
  Let  $A=(a_{i,j})$ and $B=(b_{i,j})$ be two $n\times n$ integer matrices.
  Next, let $a'$ be a subsequence of the sequence of entries in an
  $i$-th row of $A$ and let $b'$ a uniform subsequence of
  the sequence of entries in an $j$-th column of $B.$
  If $a'$ is non-decreasing and the set $\{ a_{i,k}+b_{k,j}| k\in [n] \wedge a_{i,k}\in a'
  \wedge b_{k,j}\in b'\}$ is not empty
  then the minimum sum in the set
  is achieved by the pair minimizing the index $k$.
  Analogously, if $a'$ is non-increasing
  then the minimum sum is achieved by the pair
  maximizing the index $k$.
\end{remark}

Remark \ref{rem: third} and the algorithm
depicted in Fig. \ref{fig: algo2} yield the following theorem.

\begin{theorem}\label{theo: mat1}
  Let  $A=(a_{i,j})$ and $B=(b_{i,j})$ be two $n\times n$ integer matrices.
  Suppose that at least one of the two following conditions holds:
  \begin{enumerate}
  \item the entries in each column of $B$ range over at most $c$ integers
and for each $i \in [n],$ there is given
  a decomposition of the $i$-th row of $A$
  into at most $m$ monotone subsequences;
\item
 the entries in each row of $A$ range over at most $c$ integers
and for each $j\in [n],$ there is given
  a decomposition of the $j$-th column of $B$
  into at most $m$ monotone subsequences,
  \end{enumerate}
  Then, the $(\min, +)$  product of $A$ and $B$ can be computed
  in $O(mc n^{2.569})$ time.
\end{theorem}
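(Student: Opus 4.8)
The plan is to prove correctness and the running time of the algorithm in Fig.~\ref{fig: algo2}, treating condition~1 directly and reducing condition~2 to it by transposition. The structural fact I would exploit is that a uniform subsequence $b^r_j$ of the $j$-th column of $B$ keeps $b_{k,j}$ constant over all admissible indices $k$, so minimizing $a_{i,k}+b_{k,j}$ over the witnesses of a fixed pair $(A^o,B^r)$ amounts to minimizing $a_{i,k}$ alone. This is exactly Remark~\ref{rem: third}, and it is what lets the subsequences $a^o_i$ of the rows of $A$ be a \emph{mixture} of non-decreasing and non-increasing pieces, unlike in Theorem~\ref{theo: mat}. Accordingly, for each pair $(A^o,B^r)$ the algorithm computes \emph{both} the minimum and the maximum witnesses of the Boolean product and, entry by entry, selects the minimum witness when $a^o_i$ is non-decreasing and the maximum witness when $a^o_i$ is non-increasing.

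For correctness I would argue the two inequalities separately, as in the proof of Theorem~\ref{theo: mat}. For the first, every value assigned to $c_{i,j}$ in lines~18--19 has the form $a_{i,k}+b_{k,j}$ for a genuine common index $k$, so no output entry can fall below the true $(\min,+)$ value. For the converse, I would fix an optimal index $k^{\ast}$ with $a_{i,k^{\ast}}+b_{k^{\ast},j}$ equal to the true value; then $a_{i,k^{\ast}}$ lies in some subsequence $a^o_i$ and $b_{k^{\ast},j}$ lies in some uniform subsequence $b^r_j$, so $char(a^o_i)_{k^{\ast}}=char(b^r_j)_{k^{\ast}}=1$ and $d_{i,j}\neq 0$ for that pair. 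By Remark~\ref{rem: third}, the extreme witness chosen according to the direction of $a^o_i$ already attains the minimum of $a_{i,k}+b_{k,j}$ over the common indices of the pair, which is at most $a_{i,k^{\ast}}+b_{k^{\ast},j}$; hence the update in line~18 or~19 brings $c_{i,j}$ down to the correct value.

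For the running time, the decomposition yields $m$ Boolean matrices $A^o$ and $c$ Boolean matrices $B^r$, hence $mc$ pairs; for each pair the dominant cost is computing the minimum and the maximum witnesses of one Boolean $n\times n$ product, which by Fact~\ref{fac: minwit} is $O(n^{2.569})$ per witness type, with the $O(n^2)$ entrywise updates absorbed, giving $O(mcn^{2.569})$ overall. Finally, condition~2 reduces to condition~1 by transposition: the $(\min,+)$ product of $A$ and $B$ is the transpose of the $(\min,+)$ product of $B^{T}$ and $A^{T}$, in which the rows of $B^{T}$ (the columns of $B$) carry the monotone decomposition into $m$ subsequences while the columns of $A^{T}$ (the rows of $A$) split into at most $c$ uniform subsequences; running the procedure above and transposing its output yields $C$ within the same bound. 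The only point I expect to require care is verifying that mixing non-decreasing and non-increasing pieces within a single row of $A$ loses nothing: this is safe precisely because the counterpart subsequence $b^r_j$ is uniform, hence simultaneously non-decreasing and non-increasing, so Remark~\ref{rem: third} applies to each pair $(o,r)$ independently of the direction of $a^o_i$.
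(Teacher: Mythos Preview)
Your proposal is correct and follows essentially the same approach as the paper: reduce condition~2 to condition~1 by transposition, decompose each column of $B$ into at most $c$ uniform subsequences, apply the algorithm in Fig.~\ref{fig: algo2}, argue correctness via Remark~\ref{rem: third} with the two-inequality argument, and bound the running time by the $mc$ extreme-witness computations using Fact~\ref{fac: minwit}. The only details the paper spells out that you leave implicit are the $O(cn^2)$ preprocessing to form the uniform decomposition of $B$ and padding to exactly $m$ subsequences with empty ones, neither of which affects the argument or the bound.
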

\begin{proof}
  For a square matrix $D,$ let $D^t$ denote its transpose.
  $(AB)^t=B^tA^t$ holds. Therefore, we may assume w.l.o.g. that
  the first condition in the theorem statement holds.
  Since then the entries in each column of $B$ range over at most $c$
  different values we can easily decompose each column
  of $B$ into $c$ uniform subsequences (some can be empty)
  in $O(cn^2)$ time in total. We can also fill the decompositions
  of the rows of $A$ to exactly $m$ monotone subsequences
  by adding empty subsequences.
  Hence, we can apply Algorithm 2.
  It remains to show the correctness of the algorithm
  and estimate its time complexity.

  By Remark \ref{rem: third}, the following condition holds: (**) if
  $a^o_i$ is non-decreasing
  and $d_{i,j}\neq 0$ in line 18 of the
  algorithm in Fig. \ref{fig: algo2} then $\min
  \{a_{i,k}+b_{k,j}|k\in [n] \wedge char(a^o_i)_k=1 \wedge char(b^r_j)_k=1 \}$ is equal to
  the first argument of the minimum in this line, i.e.,
  $a_{i,minwit(d_{i,j})}+b_{minwit(d_{k,j}),j}.$
Also, if $a^o_i$ is non-increasing and
and $d_{i,j}\neq 0$ in line 19 of the
algorithm in Fig. \ref{fig: algo2} then
$\min\{a_{i,k}+b_{k,j}|k\in [n] \wedge char(a^o_i)_k=1 \wedge char(b^r_j)_k=1 \}$
   is equal to
  the first argument of the minimum in this line, i.e.,
  $a_{i,maxwit(d_{i,j})}+b_{maxwit(d_{i,j}),j}.$
  Hence,
  none of the entries of the output matrix $C$
has a lower value than the corresponding entry  of
the $(\min,+)$ matrix product of $A$ and $B.$
Conversely, if the ${i,j}$ entry of
the $(\min,+)$ matrix  of $A$ and $B$ equals
$a_{i,k}+b_{k,j}$, where $k\in[n],$ then there exist
$o,\ r$ such that  $char(a^o_i)_k=1$ and $char(b^r_j)_k=1.$
Hence, again by (**) and lines  18, 19 in the algorithm,
the $c_{i,j}$ entry in the output matrix  has
value not larger than the corresponding entry
of the $(\min,+)$ matrix product of $A$ and $B.$

The time complexity of the algorithm is dominated by
the $mc$ computations of minimum and maximum witnesses
of the Boolean product of two Boolean $n\times n$ matrices.
Thus, by Fact \ref{fac: minwit} the algorithm runs in  $O(mcn^{2.569})$ time.
\qed \end{proof}

Recall that for an integer sequence $s,$ $mon(s)$ denotes
the minimum number of monotone subsequences into which $s$
can be decomposed.
By Fact \ref{fac: 2}, we obtain immediately the following corollary from
Theorem \ref{theo: mat1}.

\begin{corollary}\label{cor: mm}
  Let $A$ and $B$ be two $n\times n$ integer matrices.
  Let $m_1$ be the maximum of $mon(d)$ over
  all sequences $d$ formed by consecutive
  entries in the rows of $A$ 
  and let $m_2$ be the maximum of $mon(d)$ over
  all sequences $d$ formed by consecutive
  entries in the columns of $B.$
  If the entries in each row of $A$ range over at most $c_a$ different
  integers then the $(\min,+)$ matrix product
  of $A$ and $B$ can be computed in $O(m_2c_an^{2.569}\log n)$
  time. Similarly, if the entries in each column of $B$ range over at
  most $c_b$ different integers than the product
  can be computed in $O(m_1c_bn^{2.569}\log n)$ time.
\end{corollary}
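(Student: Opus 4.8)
The plan is to combine Theorem \ref{theo: mat1} with Fact \ref{fac: 2}, using the latter merely to produce the monotone decompositions required as input by the former. I would treat the two assertions symmetrically and argue the first in full, deducing the second by the same reasoning with the roles of the rows of $A$ and the columns of $B$ exchanged (or, equivalently, via the transpose identity $(AB)^t = B^t A^t$ already used in the proof of Theorem \ref{theo: mat1}).

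First, assume that the entries in each row of $A$ range over at most $c_a$ different integers. For each column of $B$, which is a sequence of $n$ integers, I would invoke Fact \ref{fac: 2} to decompose it into $O(mon(d)\log n)$ monotone subsequences in $O(n^{1.5}\log n)$ time. Since $m_2$ upper-bounds $mon(d)$ over all such column sequences, every column of $B$ is thereby decomposed into $O(m_2\log n)$ monotone subsequences, and computing all $n$ of these decompositions costs $O(n^{2.5}\log n)$ time in total.

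With these decompositions in hand, condition~2 of Theorem \ref{theo: mat1} is satisfied with $c = c_a$ and $m = O(m_2\log n)$, so the $(\min,+)$ product can be computed in $O(m c\, n^{2.569}) = O(m_2 c_a n^{2.569}\log n)$ time. It then remains only to confirm that the preprocessing does not dominate: since $n^{2.5}\le n^{2.569}$ and $m_2, c_a \ge 1$, the $O(n^{2.5}\log n)$ decomposition cost is absorbed into the stated bound rather than contributing a separate additive term. The second assertion follows identically, applying Fact \ref{fac: 2} instead to the rows of $A$ to obtain decompositions into $O(m_1\log n)$ monotone subsequences and then invoking condition~1 of Theorem \ref{theo: mat1} with $c = c_b$.

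I expect no substantial obstacle here, as this is essentially a bookkeeping composition of the two cited results. The only points requiring care are matching each hypothesis of the corollary to the correct one of the two conditions in Theorem \ref{theo: mat1}, and verifying that the $O(n^{2.5}\log n)$ cost of the monotone decompositions is indeed dominated by the main $O(m\,c\,n^{2.569}\log n)$ term, so that the single product $m_2 c_a n^{2.569}\log n$ (respectively $m_1 c_b n^{2.569}\log n$) captures the overall running time.
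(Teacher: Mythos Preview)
Your proposal is correct and follows exactly the route the paper intends: the paper simply states that the corollary is obtained ``immediately'' from Theorem~\ref{theo: mat1} by applying Fact~\ref{fac: 2}, and your write-up fills in precisely that bookkeeping, including the correct pairing of each hypothesis with the appropriate condition of Theorem~\ref{theo: mat1} and the observation that the $O(n^{2.5}\log n)$ preprocessing is absorbed.
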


Note that when $m_1$ or $m_2$ does not exceed $n^{0.119}$ then the
upper bound of Corollary \ref{cor: mm}
subsumes that of Fact \ref{fac: 4}.

We have also the following observation.

\begin{remark}
Let  $A$ and $B$ be two $n\times n$ integer matrices
such the values of the entries in each row of $A$ range over
at most $c_a$
  integers while the entries in each column of $B$ range over
  at most $c_b$ different integers.
  The $(\min, +)$  product of $A$ and $B$ can be computed
  by an immediate reduction to $c_ac_b$ Boolean matrix products
  of $n\times n$ matrices and thus it can be computed in $O(c_ac_bn^{\omega})$ time.
\end{remark}

Finally, we demonstrate
that the case when the rows of the first matrix are non-decreasing
and the columns of the second
matrix are non-increasing or {\em vice versa}
is as hard as the general case.

\begin{theorem}
  The problem of computing the $(\min,+)$ matrix product
  of two $n\times n$ integer matrices $A=(a_{i,j})$ and $B=(b_{i,j})$,
  where for $i\in [n],$ the rows $a_{i,1},...,a_{i,n}$  of $A$ are non-decreasing
  and the columns $b_{1,j},...,b_{n,j}$ of $B$ are non-increasing
  or {\em vice versa} is equally hard as computing the product
  for arbitrary $n\times n$ integer matrices.
\end{theorem}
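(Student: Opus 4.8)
The plan is to reduce the \emph{general} $(\min,+)$ matrix product to the restricted version by a cheap, product-preserving transformation. Given arbitrary integer matrices $A'=(a'_{i,j})$ and $B'=(b'_{i,j})$, I would fix a parameter $M$ larger than the maximum absolute value of any entry of $A'$ or $B'$ (so that in particular $M$ strictly exceeds every adjacent difference $|a'_{i,k}-a'_{i,k+1}|$ and $|b'_{k,j}-b'_{k+1,j}|$), and define $a_{i,k}=a'_{i,k}+Mk$ and $b_{k,j}=b'_{k,j}-Mk$. Both new matrices are built in $O(n^2)$ time, and the trivial decomposition into a single monotone subsequence per row/column is available for free.

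Next I would verify the two required properties. For monotonicity, observe that $a_{i,k+1}-a_{i,k}=(a'_{i,k+1}-a'_{i,k})+M\ge 0$, so every row of $A$ is non-decreasing, while $b_{k+1,j}-b_{k,j}=(b'_{k+1,j}-b'_{k,j})-M\le 0$, so every column of $B$ is non-increasing. For product preservation, the key point is that the offset $Mk$ depends only on the shared summation index $k$, and therefore telescopes away in each term: $a_{i,k}+b_{k,j}=a'_{i,k}+b'_{k,j}+Mk-Mk=a'_{i,k}+b'_{k,j}$. Hence $\min_k\{a_{i,k}+b_{k,j}\}=\min_k\{a'_{i,k}+b'_{k,j}\}$, i.e. the $(\min,+)$ product of $A$ and $B$ coincides entrywise with that of $A'$ and $B'$, and the output requires no post-processing at all.

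Since both the transformation and the recovery of the answer cost only $O(n^2)$ time, any algorithm solving the restricted instances (rows of $A$ non-decreasing, columns of $B$ non-increasing) yields an algorithm for arbitrary instances with the same asymptotic running time, which is exactly the claimed hardness. The \emph{vice versa} case---rows of $A$ non-increasing and columns of $B$ non-decreasing---would follow by the symmetric choice $a_{i,k}=a'_{i,k}-Mk$, $b_{k,j}=b'_{k,j}+Mk$, or equivalently by invoking the first case on the transposed identity $(AB)^t=B^tA^t$.

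I do not expect a genuine obstacle here: the whole argument rests on the single observation that an additive offset depending only on the contraction index $k$ cancels in every product term, while it can be tuned to enforce \emph{opposite} monotonicities in the two matrices simultaneously. The only point requiring mild care is choosing $M$ large enough that the added linear term dominates every adjacent difference, so that the desired monotonicity is guaranteed regardless of the (arbitrary) signs and magnitudes of the original entries.
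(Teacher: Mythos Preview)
Your approach is exactly the paper's: add a linear offset $\pm c\,kM$ that depends only on the contraction index so it cancels in every term of the $(\min,+)$ product while forcing opposite monotonicities in the two factors. There is one arithmetic slip: taking $M$ strictly larger than the maximum absolute entry value does \emph{not} guarantee that $M$ exceeds every adjacent difference, since two entries in $(-M,M)$ can differ by almost $2M$ (e.g.\ $a'_{i,k}=M-1$, $a'_{i,k+1}=-(M-1)$ gives $a_{i,k+1}-a_{i,k}=2-M<0$). The paper avoids this by using $2kM$ rather than $kM$; equivalently you could simply take $M$ to be at least twice the maximum absolute value, and the rest of your argument goes through verbatim.
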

\begin{proof}
  Let $M$ be the maximum absolute value of an entry in the matrices $A,\ B.$
  Transform the matrix $A$ to a matrix $A'$ by setting
  $a'_{i,k}=a_{i,k} +2kM$ for $i,\ k \in [n].$ Observe that
  each row in $A'$ is non-decreasing. Similarly, define
  the matrix $B'$ by setting $b'_{k,j}=b_{k,j} -2kM$
  for $j,\ k \in [n].$ Similarly observe that each column
  of $B'$ is non-increasing.
  Now, consider the $(\min,+)$ matrix products
  $C=(c_{i,j})$ of $A,\ B$ and $C'=(c'_{i,j})$ of $A',\ B'$.
  For $i,j \in [n],$ we have
  $$c'_{i,j}=\min\{ (a_{i,k}+2kM)+(b_{k,j}-2kM)|k\in [n]\}\\=c_{i,j}.$$
  The proof for the case where the rows of
  the first matrix are non-increasing
  and the columns of
  the second matrix are non-decreasing  is symmetric.
  \qed
\end{proof}
We summarize our results on the $(\min,+)$ matrix product
in Table 1.

\section{$(\min,+)$ convolution}

If we are given decompositions of the two input $n$-dimensional
vectors $a$ and $b$ into monotone subsequences
such that either all the subsequences of $a$
are non-decreasing and all the subsequences of $b$ are non-increasing
or {\em vice versa} then
we can use the algorithm depicted in Fig. \ref{fig: algo3}
in order to compute the $(\min,+)$ convolution of $a$ and $b.$
First, for each subsequence $a^i$ of $a$ and
each subsequence $b^j$ of $b,$ we compute
the Boolean vectors $char(a^i)$ and $char(b^j)$
indicating with ones 
the coordinates of $a$ or $b$ covered by $a^i$ or
$b^j,$ respectively. Next, depending
if the subsequences are non-decreasing
and non-increasing, respectively, or
{\em vice versa}, for each pair of
such subsequences $a^i$ and $b^j$, we compute
the minimum witnesses of the Boolean convolution
of $char(a^i)$ and $char(b^j)$ or the maximum witnesses of this Boolean convolution,
respectively. We use the extreme witnesses to update
the current coordinates of the computed $(\min,+)$
convolution.
The correctness of  the reduction
to extreme witnesses of the Boolean convolution
of $char(a^i)$ and $char(b^j)$ in the algorithm is implied by
the following observation.

\begin{remark}\label{rem: first}
  Let $a=(a_0,...,a_{n-1})$ and $b=(b_0,...,b_{n-1})$
  be two $n$-dimensional integer vectors.
  Next, let $a'$ be a subsequence of $a_0,...,a_{n-1}$ and
  let $b'$ be a subsequence of $b_0,...,b_{n-1}.$ For each $k \in \{0,...,2n-2\}$,
  if $a'$ is non-decreasing and $b'$ non-increasing then
  if the set $\{ a_{\ell}+b_{k-\ell}| a_{\ell}\in a'
  \wedge b_{k-\ell}\in b'\}$ is not empty
  then the minimum sum in the set
  is achieved by a pair minimizing the index $\ell$ (thus maximizing
  $k-\ell $). Analogously, if $a'$ is non-increasing and $b'$
  is non-decreasing then the minimum sum is achieved by a pair
  maximizing the index $\ell$ (thus, minimizing the index $k-\ell$).
  \end{remark}

Hence, we obtain the following theorem, correcting Theorem 3.7 in \cite{LP18}.

\begin{figure}[bt!]
  \begin{algorithmic}[1]
{\small
\REQUIRE two $n$-dimensional
vectors $a=(a_0,...,a_{n-1})$ and 
$b=(b_0,...,b_{n-1})$
with integer coordinates and their
decompositions into $m_a$ and $m_b$ 
subsequences $a^i$
and $b^j$
respectively such that either all the subsequences $a^i$
are non-decreasing and all the subsequences $b^j$
are non-increasing or {\em vice versa}.
\ENSURE the $(\min,+)$ convolution $c = (c_0,....,c_{2n-2})$ of 
$a$ and $b$. 
\FOR {each $a^i$}
\STATE form a Boolean vector $char(a^i)$
with $n$ coordinates indicating with
ones the coordinates of $a$ covered by $a^i$
\ENDFOR
\FOR {each $b^j$}
\STATE form a Boolean vector $char(b^j)$
with $n$ coordinates indicating with
ones the coordinates of $b$ covered by $b^j$
\ENDFOR
\STATE initialize the vector $c=(c_0,...,c_{2n-2})$ by setting
all its coordinates to $+\infty $
\FOR {each pair $a^i,\ b^j$}
\STATE {\bf if} the subsequence $a^i$ is non-decreasing {\bf then} compute the 
minimum witnesses $wit(d_0),$...,$wit(d_{2n-2})$
of the Boolean convolution $(d_0,...,d_{2n-2})$ of $char(a^i)$ and $char(b^j)$
\STATE {\bf if} if the subsequence $a^i$ is non-increasing {\bf then} compute the 
maximum witnesses $wit(d_0),$...,$wit(d_{2n-2})$
of the Boolean convolution $(d_0,...,d_{2n-2})$ of $char(a^i)$ and $char(b^j)$
\FOR {$k = 0$ \TO $2n-2$}
\STATE {\bf if} $d_k\neq 0$ {\bf then} $c_k\gets \min \{
a_{wit(d_k)}+b_{k-wit(d_k)},c_k\}$
\ENDFOR
\ENDFOR
\STATE $c \gets (c_0,...,c_{2n-2})$
\RETURN $c$}
\end{algorithmic}
\caption{An algorithm for computing the $(\min,+)$ convolution
$c$ of two $n$-dimensional integer vectors $a$ and $b$
given with their
decompositions into $m_a$ and $m_b$ 
subsequences respectively such that either all the subsequences of $a$
are non-decreasing and all the subsequences of $b$ are non-increasing
or {\em vice versa}.
}
\label{fig: algo3}
\end{figure}

\begin{theorem}
Let $a$ and $b$ be two $n$-dimensional integer vectors
given with the decompositions of
the sequences 
of their consecutive coordinates into $m_a$ and $m_b$ 
monotone subsequences respectively such that
either all the subsequences of $a$
are non-decreasing and all the subsequences of $b$ are non-increasing
or {\em vice versa}.
The algorithm depicted in Fig. \ref{fig: algo3}
computes the $(\min,+)$ convolution of $a$ and $b$
in $\tilde{O}(m_am_bn^{1.5})$
time.
\end{theorem}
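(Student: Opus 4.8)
The plan is to mirror, almost step for step, the correctness-plus-timing argument used for the matrix product in Theorem~\ref{theo: mat}, with Remark~\ref{rem: first} playing the role that Remark~\ref{rem: second} played there and Fact~\ref{fac: minwitcon} replacing Fact~\ref{fac: minwit} in the final time bound. First I would fix a single pair of subsequences $a^i$, $b^j$ and observe that, by construction, the Boolean convolution of $char(a^i)$ and $char(b^j)$ has $d_k\neq 0$ precisely when there is a split $k=\ell+(k-\ell)$ with $a_\ell\in a^i$ and $b_{k-\ell}\in b^j$; equivalently, $d_k\neq 0$ exactly when the set $S^{i,j}_k=\{a_\ell+b_{k-\ell}\mid a_\ell\in a^i,\ b_{k-\ell}\in b^j\}$ is nonempty. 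This is the convolution analogue of the support statement implicit in the matrix case.

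Then I would establish the key condition (the convolution counterpart of the condition (*) used in Theorem~\ref{theo: mat}): whenever $d_k\neq 0$, the extreme witness computed for coordinate $k$ realizes the minimum of $S^{i,j}_k$. Concretely, when the subsequences $a^i$ are non-decreasing and the $b^j$ are non-increasing, decreasing the witness index $\ell$ simultaneously makes $a_\ell$ smaller and, since $k-\ell$ then grows and $b^j$ is non-increasing, makes $b_{k-\ell}$ smaller as well; so the \emph{minimum} witness returned in the minimum-witness branch yields $\min S^{i,j}_k = a_{wit(d_k)}+b_{k-wit(d_k)}$, which is exactly the first argument of the update assigned to $c_k$. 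The symmetric case (subsequences of $a$ non-increasing, of $b$ non-decreasing) uses the maximum witness and is covered by the second half of Remark~\ref{rem: first}.

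With that condition in hand, correctness follows in two directions exactly as in Theorem~\ref{theo: mat}. No coordinate of the output $c$ can fall below the true convolution, because every update sets $c_k$ to a genuinely attainable sum $a_\ell+b_{k-\ell}$. Conversely, if the true value of $c_k$ equals $a_\ell+b_{k-\ell}$ for some valid split, then $a_\ell$ lies in some subsequence $a^i$ and $b_{k-\ell}$ in some $b^j$; for that pair we have $d_k\neq 0$, and the key condition forces the update to drive $c_k$ down to at most $\min S^{i,j}_k\le a_\ell+b_{k-\ell}$. Hence the output coincides with the $(\min,+)$ convolution of $a$ and $b$.

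For the running time, the dominating cost is the $m_am_b$ computations of minimum or maximum witnesses of a Boolean convolution of two $n$-dimensional vectors; by Fact~\ref{fac: minwitcon} each such computation takes $\tilde{O}(n^{1.5})$ time, and the $O((m_a+m_b)n)$ work of building the characteristic vectors together with the $O(m_am_bn)$ work of the update passes is absorbed. The total is therefore $\tilde{O}(m_am_bn^{1.5})$. I do not expect a genuine obstacle here: the only point demanding care is the index bookkeeping in Remark~\ref{rem: first}, namely verifying that extremizing $\ell$ extremizes the complementary index $k-\ell$ in the opposite direction, so that the opposite monotonicities of $a$ and $b$ reinforce each other and both summands attain their minima at the same witness. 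This is precisely why the theorem insists that one vector be non-decreasing and the other non-increasing, and also why the same-monotonicity case is not amenable to this reduction.
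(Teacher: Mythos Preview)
Your proposal is correct and follows essentially the same approach as the paper's own proof: you establish the convolution analogue of condition~(*) via Remark~\ref{rem: first}, argue correctness in both directions just as in Theorem~\ref{theo: mat}, and bound the running time by invoking Fact~\ref{fac: minwitcon} on the $m_am_b$ extreme-witness computations. The paper's proof is slightly terser but structurally identical.
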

\begin{proof}
The proof of the correctness of the algorithm depicted in
  Fig. \ref{fig: algo3} is analogous to that of the correctness of the
  algorithm depicted in Fig. \ref{fig: algo1}.  In
  particular, we obtain the following implication 
  from Remark \ref{rem: first}: (***) if $d_k\neq 0$ in line 12 of the
  algorithm in Fig. \ref{fig: algo3} then $\min
  \{a_{\ell}+b_{k-\ell}| char(a^i)_{\ell}=1  \wedge char(b^j)_{k-\ell}=1 \}$ is equal to
  the first argument of the minimum in this line, i.e.,
  $a_{wit(d_k)}+b_{k-wit(d_k)}.$
  Hence,
  none of the coordinates of the output vector
has a lower value than the corresponding coordinate of
the $(\min,+)$ convolution of $a$ and $b.$
Conversely, if the $k$ coordinate of
the $(\min,+)$ convolution of $a$ and $b$ equals
$a_{\ell}+b_{k-\ell}$ then there exists
$i,j$ such that  $char(a^i)_{\ell}=1$ and $char(b^j)_{k-\ell}=1$.
Hence, again by (***) and line 12 in the algorithm,
the $c_k$ coordinate in the output vector has
value not larger than the corresponding coordinate
of the $(\min,+)$ convolution of $a$ and $b$. 

The time complexity analysis of the algorithm in Fig. \ref{fig: algo3}
is also similar to that of the algorithm in Fig. \ref{fig: algo1}.
It is dominated by the $m_am_b$ runs of the $\tilde{O}(n^{1.5})$-time
algorithm for 
the extreme witnesses of the
Boolean convolution of two $n$-dimensional Boolean vectors
given in Fact \ref{fac:  minwitcon}.
\qed
\end{proof}
\par
\noindent
{\em  Example 2.}
We shall assume the notation from the first algorithm
in this section. Suppose that
$a=(a_0,...,a_5)=(1,7,3,9,8,4)$ and $b=(b_0,...,b_5)=(13,7,11,5,10,12).$
 Then, it is easy to see
 that in the $(\min,+)$ vector convolution
 $(c_0,...,c_{10})$ of $a$ and $b$, in particular
 $c_4= \min\{a_0+b_4,a_1+b_3,a_2+b_2,a_3+b_1,a_4+b_0\}=11$ holds. Similarly as in Example 1,
 $a$ can be decomposed
    into three non-decreasing subsequences
    $a^1=(1,\ ,3,\ ,\ ,4) ,$ $a^2=(\ ,7,\ ,\ ,8,\ ),$
    $a^3=(\ ,\ ,\ ,9,\ ,\ )$. On the other hand, $b$
    can be decomposed into two non-increasing subsequences
    $b^1=(13,\ ,11,\ ,10,\ )$ and $b^2=(\ ,7 ,\ ,5,\ ,2)$.
    Their corresponding characteristic
    Boolean vectors are $char(a^1)=(1,0,1,0,0,1),$
    $char(a^2)=(0,1,0,0,1,0),$ $char(a^3)=(0,0,0,1,0,0)$,
    and $char(b^1)=(1,0,1,0,1,0),$ $char(b^2)=(0,1,0,1,0,1),$ respectively.
    The minimum witness of
    the entry $d_4$ in the Boolean vector convolution
    $(d_0,...,d_{10})$ of $char(b^i)$ and $char(b^j)$
    is $0$ for $i=1,\ j=1,$ $4$  for $i=2,\ j=1,$
    $1$ for $i=2,\ j=2,$ and $3$ for $i=3,\ j=2,$
    respectively. For the other combinations of
    $i\in [3]$ and $j\in [2]$, it is undefined.
Hence, $c_4$ is computed as
the minimum of  $1+10,8+13,7+5,9+7$  which is $11$ as required.

When the consecutive coordinates of the two input
$n$-dimensional integer vectors are simultaneously
non-decreasing or non-increasing the problem
of computing the their $(\min,+)$ convolution
appears to be as hard  as in the general case \cite{C23}.

\begin{fact}\cite{C23}\label{fact: ale}
  The problem of computing the $(\min,+)$ convolution
  of two integer vectors $a=(a_0,...,a_{n-1})$ and
  $b=(b_0,...,b_{n-1},)$, where the sequences $a_0,...,a_{n-1}$
  and $b_0,...,b_{n-1}$ are both non-decreasing
  or both non-increasing, is 
  equally hard as computing the convolution 
  for arbitrary $n$-dimensional integer vectors.
\end{fact}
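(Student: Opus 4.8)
The plan is to mimic the additive reduction used in the proof that non-decreasing rows together with non-increasing columns are as hard as arbitrary matrices, but to adapt the signs to the convolution indexing. Let $M$ be the maximum absolute value of a coordinate of $a$ or $b$. First I would transform $a$ into $a'$ and $b$ into $b'$ by the index-dependent shifts
\[
a'_\ell = a_\ell + 2\ell M, \qquad b'_m = b_m + 2mM.
\]
Since consecutive coordinates of the original vectors differ by at most $2M$ in absolute value, each step of $a'$ and of $b'$ increases by at least $2M - 2M = 0$, so both $a'$ and $b'$ are non-decreasing.

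The second step is to check that the convolution is preserved up to an easily removable offset. For $k - \ell = m$, the key identity is $a'_\ell + b'_{k-\ell} = a_\ell + b_{k-\ell} + 2\ell M + 2(k-\ell)M = a_\ell + b_{k-\ell} + 2kM$, where the per-pair shift collapses to the index-independent term $2kM$ because the two summation indices add up to $k$. Taking the minimum over $\ell$ yields $c'_k = c_k + 2kM$ for the $(\min,+)$ convolutions $c'$ of $a',b'$ and $c$ of $a,b$. Hence $c_k = c'_k - 2kM$ is recovered in $O(1)$ time per coordinate. For the symmetric claim, when both input vectors are required to be non-increasing, I would instead use the shifts $a'_\ell = a_\ell - 2\ell M$ and $b'_m = b_m - 2mM$, which make both vectors non-increasing and give $c'_k = c_k - 2kM$.

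Since the whole transformation, together with the final correction of the output, runs in $O(n)$ time and produces entries of magnitude $O(nM)$ (still fitting the word length of the assumed RAM model), any $t(n)$-time algorithm for the monotone case yields a $t(n) + O(n)$-time algorithm for arbitrary integer vectors; combined with the trivial observation that the monotone case is a special case of the general one, this gives the stated hardness equivalence. I do not expect a genuine obstacle here: the only point requiring care is that, in contrast to the matrix product where the shared index $k$ lets the shifts cancel to zero via $+2kM$ and $-2kM$, in the convolution the two indices sum to $k$, so both shifts must carry the same sign and the residual offset $2kM$ is subtracted off at the end rather than cancelling internally.
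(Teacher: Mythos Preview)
Your proposal is correct and follows essentially the same reduction as the paper: the same index-dependent shifts $a'_i=a_i+2iM$, $b'_i=b_i+2iM$ (respectively $-2iM$) yielding $c'_k=c_k\pm 2kM$, with the same symmetric treatment of the non-increasing case. You supply a bit more justification (the $2M$ bound on consecutive differences, the $O(nM)$ entry size remark, and the trivial direction of the equivalence), but the argument is identical in substance.
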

\begin{proof}
  Let $M$ be the maximum absolute value of a coordinate in
  the $a,\ b$ vectors. Transform the vectors $a,\ b$ into
  vectors $a',\ b'$ by setting
  $a'_i=a_i+2iM$ and $b'_i=b_i+2iM$ for $i=0,...,n-1.$
  Observe that both sequences $a'_0,...,a'_{n-1}$
  and $b'_0,...,b'_{n-1}$ are non-decreasing.
  Consider the $(\min,+)$ convolutions
  $c=(c_0,...,c_{2n-2})$ of the vectors $a,\ b$ and
  $c'=(c'_0,...,c'_{2n-2})$ of the vectors $a',\ b'$.
  For $k=0,...,2n-2,$ we have
  $$c'_k=\min \{ (a_{\ell}+2\ell M)+(
  b_{k-\ell}+2(k-\ell)M)|\ell \in [ \max\{k-n+1,0\}, \min\{ k,n-1\}]\}$$
 $$ =c_{k}+2kM.$$
  Analogously, we can reduce the problem of computing
  the convolution of two arbitrary $n$-dimensional integer
  vectors to that where both input vectors form non-increasing
  sequences by using the transformation
  $a''_i=a_i-2iM$ and $b''_i=b_i-2iM$ for $i=0,...,n-1.$
\qed \end{proof}
\begin{figure}[bt!]
\begin{algorithmic}[1]
\REQUIRE two $n$-dimensional
vectors $a=(a_0,...,a_{n-1})$ and 
$b=(b_0,...,b_{n-1})$
with integer coordinates and the
decomposition of $b$  into $h$ 
uniform subsequences  $b^j$ and
a parameter $\ell \in [n].$
\ENSURE the $(\min,+)$ convolution $c = (c_0,....,c_{2n-2})$ of 
$a$ and $b$. 
\STATE  sort the coordinates of $a$
in non-decreasing order and divide them
into $\lceil n/\ell \rceil $ groups $g^i$
of size $\le \ell$
\FOR {each $i\in [\lceil n/\ell \rceil]$}
\STATE form a Boolean vector $char(g^i)$
with $n$ coordinates indicating with
ones the coordinates of $a$ covered by $g^i$
\ENDFOR
\FOR {each $j\in [h]$}
\STATE form a Boolean vector $char(b^j)$
with $n$ coordinates indicating with
ones the coordinates of $b$ covered by $b^j$
\ENDFOR
\STATE initialize the vector $c=(c_0,...,c_{2n-2})$ by setting
all its coordinates to $+\infty $
\FOR {each $j\in [h]$}
\FOR {each $i\in [\lceil n/\ell \rceil]$}
\STATE
compute the Boolean vector convolution $d^i_0,...,d^i_{2n-2}$
of $char(g^i)$ and $char(b^j)$
\ENDFOR
\FOR {$k = 0$ \TO $2n-2$}
\STATE find the smallest $i$ such that
$d^i_k\neq 0$
\STATE {\bf if} such an $i$ exists  {\bf then} find 
$a_q$ of the smallest value in $g^i$ such that $char(b^j)_{k-q}=1$
\STATE {\bf if} such an $a_q$ is defined {\bf then}
$c_k\gets \min \{
a_q+b_{k-q},c_k\}$
\ENDFOR
\ENDFOR
\STATE $c \gets (c_0,...,c_{2n-2})$
\RETURN $c$
\end{algorithmic}
\caption{An algorithm for computing the $(\min,+)$ convolution
$c$ of two $n$-dimensional integer vectors $a$ and $b$
  given with a decomposition of the sequence
  of consecutive coordinates of $b$ into $h$ uniform
  subsequences.}
\label{fig: algo4}
\end{figure}
When the entries of one of the input $n$-dimensional integer vectors
range over a relatively few distinct integers then following the
general idea of the proof of Lemma 2.1 in \cite{CH20}, we can proceed
as follows.  First, we can decompose the sequence of consecutive
coordinates of the aforementioned vector into a relatively few uniform
subsequences.  Then, we can sort the coordinates of the other vector
and divide the sorted sequence into interval groups of almost equal
size. Next, we can run Boolean vector convolution on pairs composed
of characteristic Boolean vectors covering with ones a group of the other
vector and a uniform subsequence of the first vector,
respectively. Further, using the results of the Boolean convolutions,
for a fixed uniform subsequence, for $k=0,...,2n-2,$ we can find the
group with the smallest index
containing an element whose mate
belongs to the uniform subsequence. By brute-force search in the
group, we can find a smallest element having a mate in the uniform
subsequence in order to update the computed $k$ coordinate of the
$(\min,+)$ convolution.  The algorithm is depicted in Fig. \ref{fig: algo4}.
Its correctness is implied by the following
observation based on the sorted order of the groups of the other
vector and the uniformity of considered subsequences of the first
vector.

\begin{remark}\label{rem: 4}
 Let $a=(a_0,...,a_{n-1})$ and $b=(b_0,...,b_{n-1})$
  be two $n$-dimensional integer vectors.
  Next, let $a_0,...,a_{n-1}$ be divided into
  subsequences $g^i$ such that no element 
  in $g^i$ is greater that any element in
  $g^{i+1}$ for $i=1,2,...$ Suppose that $b'$
  is a uniform subsequence of $b_0,...,b_{n-1}.$
  Then, for $k=0,...,2n-2,$ $\min \{ a_q+b_{k-q}| b_{k-q}\in b'\}$
  is equal to $\min \{a_q+b_{k-q}|a_q\in g^m \wedge b_{k-q}\in b' \},$
  where $m$ is the minimum $i$ such that there is $a_q\in g^i$
  for which $b_{k-q}\in b'.$
  \end{remark}

We obtain the following generalization of Lemma 2.1 in \cite{CH20}.

\begin{theorem}\label{theo: group}
  Let $a$ and $b$ be two $n$-dimensional integer vectors.
  Suppose that the entries of $a$ or the entries of $b$
  range over at most $h$ distinct integers.
The algorithm depicted in Fig. \ref{fig: algo4}
computes the $(\min,+)$ convolution of $a$ and $b$
in $\tilde{O}(hn^{1.5})$
time.
\end{theorem}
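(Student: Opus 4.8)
The plan is to first reduce to the case that the entries of $b$ range over at most $h$ distinct integers. The $(\min,+)$ convolution is symmetric in its two arguments, since $\min\{a_{\ell}+b_{k-\ell}\}=\min\{b_{\ell}+a_{k-\ell}\}$ after the substitution $\ell\mapsto k-\ell$; hence if instead the entries of $a$ range over at most $h$ distinct integers, I would simply swap the roles of $a$ and $b$. Thus we may assume the entries of $b$ take at most $h$ distinct values, decompose $b$ into at most $h$ uniform subsequences $b^1,\dots,b^h$ by collecting equal coordinates, and invoke the algorithm of Fig.~\ref{fig: algo4} with the parameter set to $\ell=\lceil\sqrt{n}\,\rceil$.

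For correctness I would fix an output index $k$ and a uniform subsequence $b^j$ and show that the inner loop computes $\min\{a_q+b_{k-q}\mid b_{k-q}\in b^j\}$ exactly. The Boolean convolution $(d^i_0,\dots,d^i_{2n-2})$ of $char(g^i)$ and $char(b^j)$ satisfies $d^i_k\neq 0$ precisely when the group $g^i$ contains some $a_q$ whose mate $b_{k-q}$ lies in $b^j$; therefore the smallest index $i$ with $d^i_k\neq 0$ located by the algorithm is exactly the quantity $m$ of Remark~\ref{rem: 4}. Since the groups arise from the sorted order of $a$, no element of $g^i$ exceeds any element of $g^{i+1}$, and since every coordinate covered by the uniform subsequence $b^j$ carries a common value, minimizing $a_q+b_{k-q}$ over mates in $b^j$ amounts to minimizing $a_q$, which by the group ordering is attained inside $g^m$. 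Thus Remark~\ref{rem: 4} guarantees that the brute-force search for the smallest usable $a_q$ in $g^m$ returns the correct value for the pair $(k,b^j)$, and taking the minimum over all $h$ uniform subsequences yields the true $(\min,+)$ convolution. I would make this precise by the same two-sided inequality argument used in the proof of Theorem~\ref{theo: mat}: no output coordinate drops below the true value because every update uses a genuine pair $(a_q,b_{k-q})$, and none stays above it because every optimal pair is witnessed by some group and some subsequence.

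For the running time I would keep $\ell$ free and charge each phase separately. Sorting and partitioning $a$ costs $O(n\log n)$, and forming all the characteristic vectors costs $O(n^2/\ell+hn)$. The dominant work splits into three parts: the $h\lceil n/\ell\rceil$ Boolean convolutions, each computable in $\tilde{O}(n)$ time by Fact~\ref{fac: 1}, contributing $\tilde{O}(hn^2/\ell)$; the linear scans over $i=1,2,\dots$ that locate, for each of the $h$ subsequences and each of the $2n-1$ output coordinates, the smallest group with $d^i_k\neq 0$, contributing $O(hn\cdot n/\ell)=O(hn^2/\ell)$; and the per-coordinate brute-force search inside a single group of size at most $\ell$, contributing $O(hn\cdot\ell)$. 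The total is $\tilde{O}(hn^2/\ell+hn\ell)$, and choosing $\ell=\Theta(\sqrt{n}\,)$ balances the two competing terms to give $\tilde{O}(hn^{1.5})$.

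I expect the delicate point to be the time bookkeeping rather than the correctness, which is essentially immediate from Remark~\ref{rem: 4}. In particular one must verify that locating the minimal nonzero group and the in-group brute force are charged to the $hn^2/\ell$ and $hn\ell$ terms respectively, and confirm that vector formation and the boundary handling (reading $char(b^j)_{k-q}$ only when $0\le k-q\le n-1$) do not dominate. Once this accounting is settled, the choice $\ell=\Theta(\sqrt{n}\,)$ balances the two dominant costs and the claimed bound follows.
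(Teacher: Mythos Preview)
Your proposal is correct and follows essentially the same approach as the paper: the same symmetry reduction to the case where $b$ has at most $h$ values, the same invocation of Remark~\ref{rem: 4} for the two-sided correctness argument, and the same three-way split of the running time into $\tilde{O}(hn^2/\ell)$ for the Boolean convolutions, $O(hn^2/\ell)$ for locating the minimal nonzero group, and $O(hn\ell)$ for the in-group brute force, balanced at $\ell=\Theta(\sqrt{n})$. If anything, your accounting of the characteristic-vector construction cost $O(n^2/\ell+hn)$ is slightly more careful than the paper's.
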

\begin{proof}
  Since the convolution of $a$ and $b$ is equal to that
  of $b$ and $a,$ we may assume w.l.o.g. that the entries
  of $b$ range over at most $h$ distinct integers.
  The vector $b$ can be easily decomposed into $h$
  uniform subsequences, some of them might be empty,
  say in $O(n\log h)$ time. Thus, we can use Algorithm 4
  depicted in Fig. \ref{fig: algo4}. It remains to show
  the correctness of the algorithm and estimate its complexity.

  By Remark \ref{rem: 4}, the $c_k$ coordinate output
  by the algorithm is never smaller than the corresponding
  coordinate of the $(\min,+)$ vector convolution
  of $a$ and $b.$ On the other hand, if $a_q+b_{k-q}$
  is the $k$ coordinate of the $(\min,+)$ convolution
  of $a$ and $b$ then there must be a group $g^i$
  to which $a_q$ belongs and a uniform subsequence $b^j$
  that $b_{k-q}$ belongs to. The algorithm sets
  $c_k$ to a value not less than
  $\min \{ a_{q'}+b_{k-q'}| a_{q'}\in g^i \wedge b_{k-q'}\in b^j\}$
  so $c_k$ is not greater than
  the corresponding coordinate of the $(\min,+)$
  convolution.

  The coordinates of $a$ can be sorted and divided
  into the consecutive groups of $\le \ell$ elements
  in $O(n\log n)$ time. All other steps outside the block
  between lines 9 and 18 can be implemented in $O(n)$ time.
  The computations of the Boolean vector convolutions in line 11
  takes $\tilde{O}(hn/\ell \times n)$ time in total by Fact \ref{fac: 1}.
  Finding the smallest $i$ such that $d^i_k\neq 0$ in line 14 takes
  $O(hn^2/\ell)$ time in total. 
  Finally, finding $a_{q}$ of smallest value in $g^i$
  satisfying the conditions in line 15 takes $O(h n \ell)$
  time in total. By setting $\ell= \sqrt {n},$ we obtain the upper bound
  $\tilde{O} (hn^{1.5})$ on the running time of the algorithm.
    \qed
\end{proof}
Because of the correction of Theorem 3.7 from \cite{LP18} and Theorem \ref{theo: group},
several entries in Table 1
in \cite{LP18} need to be updated. Table 2
presents the updated  version of the table.

\section*{Acknowledgments}
  Thanks go to Alejandro Cassis for pointing the flaw
  in the statement of Theorem 3.7 in \cite{LP18},  providing
  Fact \ref{fact: ale}, and
  suggesting Theorem \ref{theo: group}.

\vfill
\end{document}